\documentclass[aps,prl,twocolumn,showpacs,superscriptaddress,groupedaddress]{revtex4}
\usepackage{amsmath}
\usepackage{amsfonts}
\usepackage{amssymb}
\usepackage{graphicx}
\usepackage{morefloats}
\usepackage{dcolumn}   
\usepackage{bbold}
\usepackage{bm}       
\usepackage{multirow}
\usepackage{relsize}   
\usepackage{amsthm}

\hyphenation{ALPGEN}
\hyphenation{EVTGEN}
\hyphenation{PYTHIA}

\begin{document}

\newcommand{\ket}[1]{\left |#1 \right \rangle}
\newcommand{\bra}[1]{\left  \langle #1 \right |}
\newcommand{\braket}[2]{\left \langle #1 \middle | #2 \right \rangle}
\newcommand{\ketbra}[2]{\left | #1 \middle \rangle \middle \langle #2 \right |}

\newtheorem{conj}{Conjecture}
\newtheorem{thm}{Theorem}

\widetext

\title{Constraints on Maximal Entanglement Under Groups of Permutations}
\author{Alexander Meill, Jayden Butts, and Elijah Sanderson}
\date{\today}

\begin{abstract}
We provide a simplified characterization of entanglement in physical systems which are symmetric under the action of subgroups of the symmetric group acting on the party labels.  Sets of entanglements are inherently equal, lying in the same orbit under the group action, which we demonstrate for cyclic, dihedral, and polyhedral groups.  We then introduce new, generalized relationships for the maxima of those entanglement by exploiting the normalizer and normal subgroups of the physical symmetry group.
\end{abstract}

\maketitle

\section{Introduction}

Few features of quantum mechanics have proved more challenging, interesting, and widely applicable than entanglement.  It has been connected to a wide range of fields of physics, from quantum field theory \cite{Maldacena1999}\cite{Almheiri:2014lwa} to condensed matter physics \cite{Wiegner:12}\cite{2014JSMTE..11..013L}, to quantum information \cite{PMID:32541968}\cite{PhysRevA.98.042338}.  Its remarkable applications have prompted major mathematical studies devoted entirely to understanding its properties \cite{doi:10.1063/1.1497700}\cite{Gour_2008}\cite{PhysRevA.72.032313}.  Entanglement has been found to come in many forms \cite{PhysRevA.65.052112}\cite{PhysRevLett.108.230502}, to be measurable and quantified by different means \cite{PhysRevA.72.052331}, and subject to constraints on how it can be created and distributed in various systems \cite{PhysRevA.61.052306}\cite{PhysRevA.69.022309}.  Understanding the nature and limits of entanglement continues to be an active and flourishing area of research, both theoretical and experimental.

A natural question regarding entanglement, which has drawn much attention, is how much entanglement a given system can support.  More specifically, what are the maximally entangled states of a given Hilbert space, and how much entanglement do they exhibit?  This question has been asked, studied, and answered in many types of physical systems, ranging from spin lattices \cite{Hastings_2007} to black holes \cite{Faulkner:2013ica}.  A limiting frustration in those studies is that the complexity of analytically quantifying entanglement grows rapidly as the number of particles and dimensions of the spaces associated to those particles increase.

One approach to reducing the analytic hurdle of working with entanglement is to restrict the Hilbert space to some smaller, physically relevant subsystem.  A common choice of subsystem are those which exhibit some physical symmetry.  Symmetries such as permutation, translation, and rotation invariance are ubiquitous in entanglement research \cite{PhysRevLett.95.260604}\cite{Ungar2018}.  Not only do they shrink the size of the Hilbert space, those symmetries form inherent relationships between the entanglements \cite{PhysRevA.96.062310}\cite{PhysRevA.100.042318}, making the evaluation of maximal entanglement much more tractable.

The study of the relationship between symmetries and entanglement naturally invokes many mathematical approaches.  In this work, we examine the group theoretic properties of permutations of party labels in a multipartite system.  More specifically, for some subgroup $G \leq S_n$ of the symmetric group acting as permutations on the party labels, we show that the noramlizer and normal subgroups of $G$ offer a powerful tool in simplifying the landscape of maximal entanglement of systems invariant under $G$.

In this paper we by introducing the problem of counting distinct entanglements under some permutation group symmetry, $G$.  Then we introduce the relationships between the maxima of those entanglement, which is the main result of this paper.  We then give examples of those relationships in some common settings, and finally we conclude.

\section{Background and Definitions}

This work considers the entanglement of $n$ particles, to which the dimension of the Hilbert space associated to each particle is $d$.  We write the states of those particles as
\begin{align}
    \ket \psi &= \sum_{i_1=1}^d \ldots \sum_{i_n=1}^d a_{i_1 \ldots i_n} \ket{i_1 \ldots i_n} \\
    &= \sum_{\bold{i} \in [d]^{\otimes n}} a_{\bold{i}} \ket{\bold{i}},
\end{align}
introducing the notation that a bolded character, $\bold{i}=(i_1 \ldots i_n)$, represents an element of $[d]^{\otimes n}$, where $[d]=\{1, \ldots, n\}$ is the set of integers from 1 to $d$.

These particles are assumed to live in a system which is symmetric under some set of physical transformations, such as rotations and reflections.  Those transformations permute the party labels, and can therefore be represented as a group, $G$, which is a subgroup of the symmetric group, $S_n$, acting on the $n$ party labels.  The permuting action of an element, $g \in G$, on a party label, $i \in \mathbb [n]$, is most simply written as $g(i)$, but we also extend that notation to lists of $n$ integers and associated state vectors,
\begin{align}
    g(\bold i) &= g(i_1 \ldots i_n) = \left(i_{g^{-1}(1)} \ldots i_{g^{-1}(n)} \right) \\
    U_g \ket{\bold{i}} &= \ket{g(\bold{i})},
\end{align}
where $U_g$ is the unitary representation of $g$ on $\mathbb C_d^{\otimes n}$.  The physical symmetry implies that states in the system are invariant, up to an overall phase, under the action of $G$,
\begin{align}\label{Gsym}
    U_g \ket{\psi} = e^{i \theta_g} \ket{\psi} \quad \forall \quad g \in G.
\end{align}
This constrains the state coefficients according to
\begin{align}
    U_g \ket{\psi} &= e^{i \theta_g} \ket{\psi} \\
    \sum_{\bold{i} \in [d]^{\otimes n}} a_{\bold{i}} U_g \ket{\bold{i}} &= \sum_{\bold{i} \in [d]^{\otimes n}} a_{\bold{i}} e^{i \theta_g} \ket{\bold{i}} \\
    \sum_{\bold{i} \in [d]^{\otimes n}} a_{g^{-1}(\bold{i})} \ket{\bold{i}} &= \sum_{\bold{i} \in [d]^{\otimes n}} a_{\bold{i}} e^{i \theta_g} \ket{\bold{i}} \\ \label{asym}
    \implies \quad a_{\bold i} &= a_{g(\bold i)} e^{i \theta_g} \quad \forall \quad \bold{i}\in [d]^{\otimes n}, g \in G.
\end{align}
We will denote the Hilbert space of states with the above symmetry as $\mathcal H_G^{(n)}$.

Within these symmetric physical systems we will be examining the entanglement of the reduced states of sets of $m$ particles.  Let $E$ be an entanglement monotone defined on mixed states of $m$ particles, and let $X_m$ be the set of disordered $m$-tuples of distinct elements of $[n]$.  We then evaluate $E$ as
\begin{align}
    E_{x \in X_m} \left( \ket{\psi} \right) = E \left( \rho_x \right) = E \left( \text{Tr}_{\bar x} \left( \ketbra \psi \psi \right) \right).
\end{align}
Importantly, because $x$ is a disordered set of party labels, $E$ should not depend on the ordering of the particles upon which it evaluates.  There are many such entanglement measures of interest.  For pairwise entanglement, $m=2$, the concurrence \cite{PhysRevLett.80.2245} and negativity \cite{PhysRevLett.77.1413} are common representatives among the large set of measures.  In $m=3$, the 3-tangle \cite{PhysRevA.61.052306} is a notable candidate, though, like many pure state measures, it requires the convex roof extension to be defined on mixed states.  And for larger, arbitrary $m$, the geometric measure of entanglement \cite{https://doi.org/10.1111/j.1749-6632.1995.tb39008.x} and generalized versions of the tangle \cite{PhysRevA.63.044301} offer an analytically challenging but well defined approach.

For the most naive counting, calculating every possible entanglement for a state, $\ket \psi \in H_G^{(n)}$, and for a given $m$ would result in $\binom{n}{m}$ entanglements.  This count is greatly reduced, however, by the symmetry under $G$ thanks to,
\begin{align}\label{Eorbits}
    E_{g(x)}(\ket{\psi})=E_x(U_{g^{-1}} \ket{\psi}) = E_x(\ket \psi) \quad \forall \quad g \in G.
\end{align}
This implies that any choice of parties which lies entirely in the orbit, $O_G(x)$, of $x$ under $G$ will lead to the same entanglement.  Determining the number of distinct entanglements now equates to finding the set of distinct orbits, $X_m/G$, and finding the order of that set, $|X_m/G|$.  The Cauchy-Frobenius Theorem offers one approach for finding that number,
\begin{align}
    \left| \frac{X_m}G \right| = \frac{1}{|G|} \sum_{g \in G} \left| X_m^g \right|,
\end{align}
where $X_m^g$ is the set of elements in $X_m$ that are fixed by $g$.  However, applications of this problem have been studied across many fields of mathematics, algebraic, combinatoric, and otherwise, and a pair of solutions are highlighted in the examples section of this work.

The general goal of this work is to continue this process of reducing the set of distinct entanglements.  It is worth noting at this point, that while our focus is on the entanglement of reductions of the overall state, $E(\rho_x)$, the equating of sets of parties, $x\in X_m$, implies that any form of entanglement which is uniquely specified by a choice of $x$ would be subject to the same reduced count according to (\ref{Eorbits}) as well as the first of our two theorems below.  The simplest example of this is that the complement of $x\in X_m$, which we have labeled $\bar x \in X_{n-m}$, is uniquely specified by $x$, as such $|X_m/G|=|X_{n-m}/G|$, and there are then the same number of distinct entanglement among $m$ and $n-m$ particles.  Another such example lies in the examination of pure state entanglement between bipartitions of the particles, namely the entanglement, $E_{x|\bar x}$ between the set of particles, $x$ and $\bar x$, such as those used in area law bounds for entanglement in condensed matter systems \cite{Hastings_2007}\cite{2017NatPh..13..556H}.

\section{Main Result - Constraints on Maximal Entanglements}

With the landscape of potential entanglements simplified by (\ref{Eorbits}), we turn our attention to the main subject of this work; finding relationships for the maxima of those entanglements.  At face value, for a given $m$, the number of distinct maximal entanglements to determine appears to be $|X_m/G|$.  We offer the following pair of theorems to reduce that count.

\subsubsection{Equating Maxima via the Normalizer of $G$}

Let $N_G$ be the normalizer of $G$ in $S_n$.  While the normalizer of $G$ may not convey any physical symmetry of the system, it does help reduce the number of distinct entanglements according to our first theorem,
\begin{thm}  The maximal entanglements evaluated on elements of $X_m$ are equated under the orbit of $N_G$.
\begin{align}
\max_{\ket \psi \in \mathcal H_G^{(n)}} E_x(\ket{\psi}) = \max_{ \ket\psi \in \mathcal H_G^{(n)}} E_{\nu(x)}(\ket{\psi}) \quad \forall \quad \nu \in N_G.
\end{align}
\end{thm}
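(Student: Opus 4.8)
The plan is to show that conjugation by a normalizer element is a symmetry of the space $\mathcal H_G^{(n)}$ itself, and then to absorb the action of $\nu$ into a change of variables inside the maximization. The entanglement $E_x$ never needs to be evaluated explicitly; the argument is entirely structural.

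The central step, which I would isolate as a lemma, is that for any $\nu \in N_G$ the unitary $U_\nu$ maps $\mathcal H_G^{(n)}$ bijectively onto itself. To verify this, take any $\ket\psi \in \mathcal H_G^{(n)}$ and any $g\in G$, and compute $U_g U_\nu \ket\psi = U_\nu U_{\nu^{-1}g\nu}\ket\psi$. Because $\nu$ normalizes $G$, the element $g' = \nu^{-1}g\nu$ again lies in $G$, so the defining symmetry (\ref{Gsym}) gives $U_{g'}\ket\psi = e^{i\theta_{g'}}\ket\psi$, and therefore $U_g\big(U_\nu\ket\psi\big) = e^{i\theta_{g'}}\big(U_\nu\ket\psi\big)$. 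Thus $U_\nu\ket\psi$ is invariant up to a phase under every $g\in G$, i.e.\ $U_\nu\ket\psi \in \mathcal H_G^{(n)}$. Since $\nu^{-1}\in N_G$ as well, $U_{\nu^{-1}}$ likewise preserves $\mathcal H_G^{(n)}$, so $U_\nu$ restricts to a bijection of the symmetric space.

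With the lemma in hand, I would invoke the relabeling identity. The first equality of (\ref{Eorbits}), namely $E_{\nu(x)}(\ket\psi) = E_x(U_{\nu^{-1}}\ket\psi)$, is a purely kinematic statement about how permuting party labels commutes with the partial trace defining $\rho_x$; it does not use the $G$-invariance of $\ket\psi$ and so holds for every $\nu\in S_n$, not just for $g\in G$. Applying it lets me rewrite $\max_{\ket\psi\in\mathcal H_G^{(n)}} E_{\nu(x)}(\ket\psi) = \max_{\ket\psi\in\mathcal H_G^{(n)}} E_x(U_{\nu^{-1}}\ket\psi)$. I would then set $\ket\phi = U_{\nu^{-1}}\ket\psi$: by the lemma, as $\ket\psi$ ranges over all of $\mathcal H_G^{(n)}$ so does $\ket\phi$, hence the right-hand maximum equals $\max_{\ket\phi\in\mathcal H_G^{(n)}} E_x(\ket\phi)$, which is exactly the asserted equality.

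The main obstacle is the lemma: the entire theorem hinges on $U_\nu$ not carrying us out of $\mathcal H_G^{(n)}$, and this is precisely where the normalizer condition $\nu^{-1}G\nu = G$ is indispensable. A generic permutation would instead send the symmetric subspace to the symmetric subspace of the conjugate group $\nu^{-1}G\nu$, a different space, and the change of variables would fail. I would also remark that the accumulated phases $e^{i\theta_{g'}}$ are harmless throughout, since $E$ is evaluated on the reduced density matrix $\rho_x$, which is insensitive to an overall phase of the state.
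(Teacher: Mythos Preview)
Your proposal is correct and follows essentially the same route as the paper: show that $U_\nu$ preserves $\mathcal H_G^{(n)}$ via the normalizer identity $U_g U_\nu = U_\nu U_{g'}$ with $g'=\nu^{-1}g\nu\in G$, then absorb $U_{\nu^{-1}}$ into the maximization using the relabeling identity $E_{\nu(x)}(\ket\psi)=E_x(U_{\nu^{-1}}\ket\psi)$. Your write-up is in fact slightly more careful than the paper's, explicitly noting the bijectivity of $U_\nu$ on the symmetric space and the irrelevance of the accumulated phases.
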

\begin{proof}
Begin by considering some permutation, $\nu \in N_G$ and state, $\ket{\psi} \in \mathcal H_G^{(n)}$.  The transformed state, $U_\nu \ket \psi$ is not necessarily equal to $\ket \psi$, because $\nu$ is not necessarily an element of $G$.  We can, however, show that $U_\nu \ket \psi \in \mathcal H_G^{(n)}$, meaning the transformed state is still invariant under $G$.  This is simple to confirm,
\begin{align}
    U_g U_\nu \ket \psi = U_\nu U_{g'} \ket \psi = U_\nu \ket \psi,
\end{align}
where $g' \in G$ thanks to $\nu$ being in the normalizer of $G$, enabling the second equality via (\ref{Gsym}).  This then leaves us with
\begin{align}
    \max_{\ket \psi \in \mathcal H_G^{(n)}} E_x(\ket{\psi}) &= \max_{U_{\nu}\ket \psi \in \mathcal H_G^{(n)}} E_x(U_{\nu^{-1}}\ket{\psi}) \\ &= \max_{\ket \psi \in \mathcal H_G^{(n)}} E_{\nu(x)}(\ket{\psi}).
\end{align}
\end{proof}
This theorem offers another substantial reduction to the number of distinct maximal entanglements and the associated maximally entangled states.  Upon finding a state, $\ket{\phi}$, which maximizes $E_x$, it then follows that $U_\nu \ket \phi$ maximizes $E_{\nu(x)}$.  The obvious question, then is how many distinct maxima are left?  Answering that question may again pool from many areas of mathematics, depending on $G$, but we offer that the Cauchy-Frobenius Theorem again applies,
\begin{align}
    \left| \frac{X_m}{N_G} \right| = \frac{1}{|N_G|} \sum_{\nu \in N_G} \left| X_m^\nu \right|,
\end{align}
or equivalently, if $N_G/G$ and $X_m/G$ are the preferred group and set to work with,
\begin{align}
    \left| \frac{X_m}{N_G} \right| = \left| \frac{X_m/G}{N_G/G} \right|= \frac{1}{|N_G/G|} \sum_{\nu \in N_G/G} \left| X_m^\nu \right|.
\end{align}
Examples using the Cauchy-Frobenius Theorem as well as other, more direct approaches can be found in the examples section.

\subsubsection{Reducing to Normal Subgroups of $G$}

While the count of distinct maximal entanglements for a given $n$ and $m$ is at its apparent minimum, the following theorem offers a means of simplifying the entanglement on $x$ if each of its constituent parties lie in the same orbit of a normal subgroup of $G$.  Namely, for a normal subgroup, $H \lhd G$, we can observe that the orbits of party labels, $i \in [n]$, under $H$ form partitions of $[n]$ of equal order.  Let $Y \in [n]/H$ be one such orbit, and let $Y_m$ be the set of disordered $m$-tuples of distinct elements of $Y$.  With these definitions, we can introduce our second theorem,
\begin{thm}
For any normal subgroup, $H \lhd G$, any orbit $Y \in [n]/H$ under that subgroup, and choice of party labels, $x \in Y_m$, from the same orbit, we have that,
\begin{align}
    \max_{\ket \psi \in \mathcal H_G^{(n)}} E_x \left( \ket \psi \right) = \max_{\ket \phi \in \mathcal H_{H|Y}^{(|Y|)}} E_x \left( \ket \phi \right),
\end{align}
where $H|Y$ is the group of actions of $H$ restricted to $Y$.
\end{thm}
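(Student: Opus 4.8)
The plan is to establish the equality by proving the two inequalities separately, treating the reduced state on the orbit $Y$ as the central object. The key preliminary observation is that, because $H \lhd G$, the $H$-orbits partition $[n]$ into blocks of equal size that $G$ merely permutes; in particular every $h \in H$ fixes $Y$ setwise, so the unitary factorizes as $U_h = U_{h|Y} \otimes U_{h|\bar Y}$ across the bipartition $Y \, | \, \bar Y$.

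For the direction $\max_{\mathcal H_G^{(n)}} E_x \leq \max_{\mathcal H_{H|Y}^{(|Y|)}} E_x$, I would start from any $\ket\psi \in \mathcal H_G^{(n)}$ and pass to $\rho_Y = \text{Tr}_{\bar Y}(\ketbra\psi\psi)$. Since $H \leq G$, the phase in (\ref{Gsym}) cancels in $\ketbra\psi\psi$, and combining this with the factorization of $U_h$ and the cyclicity of the partial trace gives $U_{h|Y} \, \rho_Y \, U_{h|Y}^\dagger = \rho_Y$ for all $h \in H$; that is, $\rho_Y$ is invariant under $H|Y$. Because $x \subseteq Y$ we have $\rho_x = \text{Tr}_{Y \setminus x}(\rho_Y)$, so $E_x(\ket\psi)$ depends on $\ket\psi$ only through the $H|Y$-invariant state $\rho_Y$. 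I would then decompose $\rho_Y$ into pure states drawn from $\mathcal H_{H|Y}^{(|Y|)}$ and invoke convexity of the monotone $E$ together with linearity of $\text{Tr}_{Y \setminus x}$ to bound $E(\rho_x)$ by the largest $E_x$ among the components, which is at most the right-hand side.

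For the reverse inequality I would argue constructively. Writing $K = \mathrm{Stab}_G(Y)$ and choosing coset representatives $g_1 = e, \ldots, g_k$ for $G/K$, so that $g_t$ carries $Y$ to the $t$-th block, I would take an optimal $\ket\phi \in \mathcal H_{H|Y}^{(|Y|)}$ and form the product across blocks $\ket\psi = \bigotimes_{t} U_{g_t} \ket\phi$. The normality of $H$ is what forces each transported within-block action back into $H|Y$, which should show $U_g \ket\psi = e^{i\theta_g} \ket\psi$ and hence $\ket\psi \in \mathcal H_G^{(n)}$; since the blocks factorize and $x \subseteq Y$, the reduced state $\rho_x$ is unchanged, so $E_x(\ket\psi) = E_x(\ket\phi)$ and the left-hand maximum is at least the right-hand one.

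The \emph{main obstacle} I anticipate is the mismatch between $H|Y$ and the possibly larger restricted stabilizer $K|_Y = \{ g|_Y : g \in K \}$. On one hand, the computation above actually shows $\rho_Y$ is invariant under all of $K|_Y$, not merely $H|Y$, so in the $\leq$ step the components of $\rho_Y$ need not individually lie in $\mathcal H_{H|Y}^{(|Y|)}$ whenever an eigenspace of $\rho_Y$ carries a higher-dimensional irreducible representation of $H|Y$; securing an $H|Y$-symmetric pure decomposition is the crux there. Dually, in the $\geq$ step the product state is genuinely $G$-invariant only when the transported actions land in $H|Y$, which is immediate when $K|_Y = H|_Y$ but requires an additional argument when $K|_Y$ is strictly larger. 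I expect both gaps to be closed by the same structural fact, namely that the extra symmetry carried by reductions of global $G$-symmetric states is exactly compensated by restricting the small-system symmetry to the normal subgroup $H$, and this compatibility is where the normality hypothesis does its real work.
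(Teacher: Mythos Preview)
Your plan mirrors the paper's proof in structure: for the $\leq$ direction, pass to $\rho_Y$, show $U_{h|Y}\,\rho_Y\,U_{h|Y}^\dagger=\rho_Y$ for all $h\in H$, decompose $\rho_Y$ into pure states lying in $\mathcal H_{H|Y}^{(|Y|)}$, and invoke convexity of $E$; for the $\geq$ direction, tensor an optimal small-system state across the $H$-orbits and check $G$-invariance.

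Two differences are worth noting. First, for $\leq$ the paper dispatches your ``main obstacle'' in one line: since $\rho_Y$ commutes with every $U_h$, it writes $\rho_Y=\sum_l p_l\ketbra{\phi_l}{\phi_l}$ with each $\ket{\phi_l}$ a simultaneous eigenvector of all the $U_h$, whence $\ket{\phi_l}\in\mathcal H_{H|Y}^{(|Y|)}$. Your worry about degenerate eigenspaces carrying higher-dimensional irreducibles of a non-abelian $H|Y$ is legitimate, but the paper does not engage with it; no separate lemma is offered. Second, for $\geq$ the paper's construction is simpler than yours: it places the \emph{same} optimal state on every block, $\ket\Psi=\bigotimes_{Y'\in[n]/H}\ket\Phi_{Y'}$, rather than transporting via coset representatives $U_{g_t}\ket\phi$, and then argues that because $H\lhd G$ the action of $G$ ``merely permutes entire orbits,'' so $U_g\ket\Psi$ is just a reindexing of the tensor product. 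The within-block action---exactly your $K|_Y$ versus $H|Y$ concern---is absorbed into that phrase and not treated further. So your anticipated structural compatibility fact is not made explicit in the paper; you are being more careful than the source, not less.
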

\begin{proof}
Consider $\rho_Y$, the reduced state of $\ket \psi \in \mathcal H_G^{(n)}$ for the particles in the orbit, $Y \in [n]/H$,
\begin{align}
    \rho_Y^{\vphantom{\dagger}} &= \text{Tr}_{\bar Y} \left( \ketbra \psi \psi \right) \\ &= \sum_{\bold{i},\bold{j} \in [d]^{\otimes n}} \; \sum_{\bold{k} \in [d]^{\otimes n-|Y|}} \bra{\bold k}a_{\bold{i}}^{\vphantom{*}} \ketbra{\bold i}{\bold j} a_{\bold{j},}^* \ket{\bold k},
\end{align}
where it is implied that the components of $\bold k$ are associated to the parties in $\bar Y$, which are being traced over.  For any $h \in H|Y$, and its complement, $\bar h \in H|\bar Y$, such that $h \times \bar h \in H$, we can show that,
\begin{align}
    U_h^{\vphantom{\dagger}} \rho_Y^{\vphantom{\dagger}} U_h^\dagger &= \sum_{\bold i, \bold j} \sum_{\bold k} a_{\bold{i}}^{\vphantom{*}} \bra{\bold k} U_h^{\vphantom{\dagger}} \otimes \mathbb 1_{\bar Y}^{\vphantom{\dagger}} \ketbra{\bold i}{\bold j} U_h^\dagger \otimes \mathbb 1_{\bar Y}^{\vphantom{\dagger}} \ket{\bold k} a_{\bold{j}}^* \\
    &= \sum_{\bold i, \bold j} \sum_{\bold k} a_{\bold{i}}^{\vphantom{*}} \bra{\bold k} \left(\mathbb 1_{\bar Y}^{\vphantom{\dagger}} \otimes U_{\bar h}^\dagger \right) U_{h\times \bar h}^{\vphantom{\dagger}} \ket{\bold i} \\
    \notag &\quad \quad \quad \quad \quad \times \bra{\bold j} U_{h\times \bar h}^{\dagger} \left(\mathbb 1_{\bar Y}^{\vphantom{\dagger}} \otimes U_{\bar h}^{\vphantom{\dagger}} \right) \ket{\bold k} a_{\bold j}^* \\
    &= \sum_{\bold i, \bold j} \sum_{\bold k} a_{h^{-1} \times {\bar h}^{-1}(\bold i)}^{\vphantom{*}} \braket{{\bar h}^{-1}(\bold k)}{\bold i} \\ &\quad \quad \quad \quad \quad \times \braket{\bold j}{{\bar h}^{-1}(\bold k)} a_{h^{-1} \times {\bar h}^{-1}(\bold j)}^* \\
    &= \rho_Y^{\vphantom{\dagger}},
\end{align}
where the last equality holds thanks to $(\ref{asym})$ and the freedom to reorder the sum in $\bold k$.  Now because $\rho_Y$ commutes with $U_h$, we can express it as
\begin{align}
    \rho_Y = \sum_l p_l \ketbra{\phi_l}{\phi_l},
\end{align}
where $\ket{\phi_l}$ are eigenstates of $U_h$.  Of course, $U_h$ being unitary implies that its eigenvalues are phases, so $U_h \ket{\phi_l} = e^{i \phi_{l,h}} \ket{\phi_l}$.  And this has so far been true for arbitrary $h \in H|Y$, meaning in total that each of the $\ket{\phi_l} \in \mathcal H_{H|Y}^{(|Y|)}$.

Now introduce $E_x$ for $x \in Y_m$.  Let the sum over $l$ in $\rho_Y$ be reordered in non-increasing order of $E_x(\phi_l)$, so $E_x(\phi_l) \geq E_x(\phi_{l+1})$.  We can now exploit the convexity of $E$ to show that,
\begin{align}
    E_x \left( \rho_Y \right) &\leq \sum_l p_l E_x \left( \ketbra{\phi_l}{\phi_l} \right) \\
    &\leq E_x \left( \ket{\phi_1} \right) \\
    &\leq \max_{\ket \phi \in \mathcal H_{H|Y}^{(|Y|)}} E_x \left( \ket \phi \right).
\end{align}

The above has been shown to be true for arbitrary $\ket \psi \in \mathcal H_G^{(n)}$, but it remains to be shown that the inequality can be saturated.  In doing so, we will need to construct the state, $\ket \Psi$, which achieves that maximum.  Start by identifying a state, $\ket \Phi$, which achieves the maximum entanglement in the reduced space,
\begin{align}
    E_x(\ket \Phi) = \max_{\ket \phi \in \mathcal H_{H|Y}^{(|Y|)}} E_x \left( \ket \phi \right).
\end{align}
We can now show that a viable construction for a state which provides the desired saturation is
\begin{align}
    \ket \Psi = \bigotimes_{Y \in [n]/H} \ket{\Phi}_Y.
\end{align}
It is apparent that $E_x(\ket \Psi) = E_x(\ket \Phi)$, but we have to show that $\ket \Psi \in \mathcal H_G^{(n)}$.  Thankfully, the action of $G$ on $[n]/H$ merely permutes entire orbits, for if parties $i\neq j$ are in the same orbit, then the same is true of $g(i)$ and $g(j)$.  So for arbitrary, $g \in G$,
\begin{align}
    U_g \ket \Psi &= \bigotimes_{Y \in [n]/H} \ket{\Phi}_{g(Y)},
\end{align}
which is again equal to $\ket{\Psi}$ thanks to the freedom to reorder the repeated tensor product in $Y$.  This resolves that $\ket{\Psi} \in \mathcal H_G^{(n)}$ and provides the desired maximum.
\end{proof}

While restricted to $x \in Y_m$, this theorem provides a powerful simplification to determining the maximal entanglement for certain $x$ by vastly reducing the size of the Hilbert space to maximize over.  It also gives a novel prescription for building entangled states, $\ket \Psi$, in the overall space by weaving together entangled states, $\ket \Phi$, from the reduced space.  It also follows that $\ket \Psi$ maximizes the entanglement not only for $x$, but for all $g(x)$.

It is also notable that this theorem simultaneously applies to entanglement among multiple numbers of particles, $m$.  So long as $x \in Y_m$, the theorem is viable, meaning it applies for $2 \leq m \leq |Y|$ for any given $H$.

\section{Examples}

We now present examples of how the entanglement landscape in $\mathcal H_G^{(n)}$ can be reduced by the results of this paper for various $G$ of notable physical significance.

\subsubsection{The Dihedral Group, $D_n$}

The Dihedral group, $D_n$, of symmetries of a regular $n$-gon is a natural starting point for the study of physical symmetries.  It is straightforward to construct a physical system of $n$ particles on a ring and ask that the reflections and rotations described by $D_n$ leave the system invariant.  The entanglement properties of such systems has been investigated previously \cite{Mirzaee2007}\cite{9063519}.  As we will see, though, with regards to maximal entanglement, the rotations described by the cyclic group, $C_n$, offer a complete picture on their own.

For an $n$ particle system invariant under $D_n$, we begin by considering $|X_m/D_n|$, the number of distinct entanglements of $m$ particles.  This problem has been solved previously \cite{Gupta79} in the context of combinatorics, where $|X_m/D_n|$ was contextualized as the number of distinct polygons formed by connecting $m$ out of $n$ points which evenly divide the circumference of a circle.  The solution was found to be
\begin{align}\label{Gupta}
    \left| \frac{X_m}{D_n} \right| = \frac 12 \left( \binom{\lfloor \frac{n-h_m}2 \rfloor}{\lfloor \frac m2 \rfloor}  + \frac 1m \sum_{\delta | (m,n)} \varphi(\delta) \binom{\frac n \delta -1}{\frac m \delta -1}\right),
\end{align}
where $h_m=m \mod 2$ and $\varphi(n)$ is Euler's totient function.

While the complete setting of $\mathcal H_{D_n}^{(n)}$ is a rich subject for the study of entanglement in physical systems, the study of maximal entanglements proves to take a narrower scope.  Thanks to the fact that $C_n \lhd D_n$, and that $C_n$ also acts transitively on $[n]$, Theorem 2 allows us to conclude that $\mathcal H_{C_n}^{(n)}$ and $\mathcal H_{D_n}^{(n)}$ share the same maxima.  As such, we will proceed to instead consider $C_n$.

\subsubsection{The Cyclic Group, $C_n$}

Let us instead consider the cylcic group, $C_n$, of order, $n$.  The symmetry of $C_n$ manifests physically in many ways, most commonly as translational invariance for chains of particles with periodic boundary conditions, or, equivalently, particles arranged on a ring subject to rotation.  Such configurations have received ample attention in the study of condense matter systems \cite{PhysRevB.100.035113} and in quantum information \cite{PhysRevLett.123.110502}.  The entanglement of these systems has accordingly been studied on many occasions \cite{PhysRevA.100.042318}\cite{PhysRevA.63.052302}.  As such, this simple setting is a prime test case for our work.

We again begin by considering the number of distinct entanglements, $|X_m/C_n|$.  This problem has also been solved previously in the context of combinatorics, as $|X_m/C_n|$ also corresponds to the number of distinct 2-color necklaces of $n$ beads, with $m$ of one color and $n-m$ of the other.  The solution provided by \cite{shevelev2011problem} is
\begin{align}\label{Shev}
    \left | \frac{X_m}{C_n} \right| = - \sum_{\delta \leq 2, \delta |(n,m)} \mu(\delta) \left | \frac{X_{m/\delta}}{D_{n/\delta}} \right|,
\end{align}
where $\mu(n)$ is the Mobius function and $|X_m/D_n|$ is exactly that from (\ref{Gupta}).

To illustrate the formula resulting from $(\ref{Shev})$, let us demonstrate the sets of entanglements in $n=8$ particles which are equated by $C_8$.  These are shown for $m=2$, 3, and 4 below in Figure 1.  Recall that the picture is much the same for sets of $n-m$ particles.
\begin{widetext}
\begin{center}
\begin{figure}[h]
\centering
\includegraphics[width = 170mm]{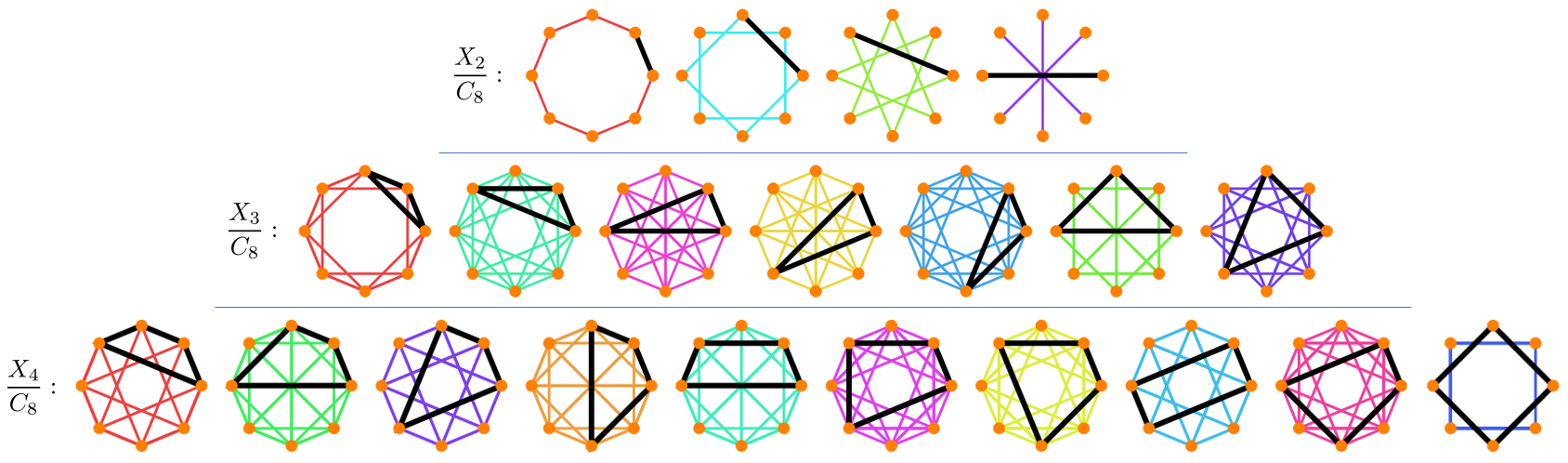}
\caption{The sets of distinct entanglements in $\mathcal H_{C_8}^{(8)}$.  Each element of $X_m/C_8$ shows the equated elements of $X_m$ with a single element highlighted.}
\end{figure}
\end{center}
\end{widetext}

While $(\ref{Shev})$ still leaves us with a formidable number of entanglements, that picture simplifies when we consider the maxima of those entanglements.  We can begin by invoking Theorem 1, which states that any two entanglements which lie in the same orbit of the normalizer of $C_n$ will share the same maximum.  The normalizer, $N_{C_n}$, of $C_n$ takes a simple form when expressed by its action on the party labels in $[n]$.  In particular, $N_{C_n}$ acting on $[n]$ is the set of permutations of the form,
\begin{align}
   \nu(i) = \alpha \, i + \beta \mod n,
\end{align}
where $\alpha, \beta \in [n]$ and $gcd(\alpha,n)=1$.  We then have that $|N_{C_n}|=n \, \varphi(n)$.  Equivalently, we could express $N_{C_n}/C_n$ simply by the $\alpha$ which `spread' the party indices.  In either case, now have a means of equating maximal entanglements and can turn to finding the number of remaining distinct maxima, $|X_m/N_{C_n}|$.

An expression for $|X_m/N_{C_n}|$ is not known for general $n$ and $m$, and is left as an open problem.  The case of $|X_2/N_{C_n}|=\tau(n)-1$ was shown in \cite{PhysRevA.100.042318}, where $\tau(n)$ is the number of divisors of $n$.  This can be seen by showing that any element, $x\in X_2$, can be mapped by $N_{S_n}$ to $x=(i,i+\delta)$ for some $\delta|n$, with the exception that $\delta \neq n$.

Despite not having a general solution, we can still understand $X_m/N_{C_n}$ in small examples, such as in $\mathcal H_{C_8}^{(8)}$, as demonstrated in Figure 2 below.

\begin{widetext}
\begin{center}
\begin{figure}[h]
\centering
\includegraphics[width = 110mm]{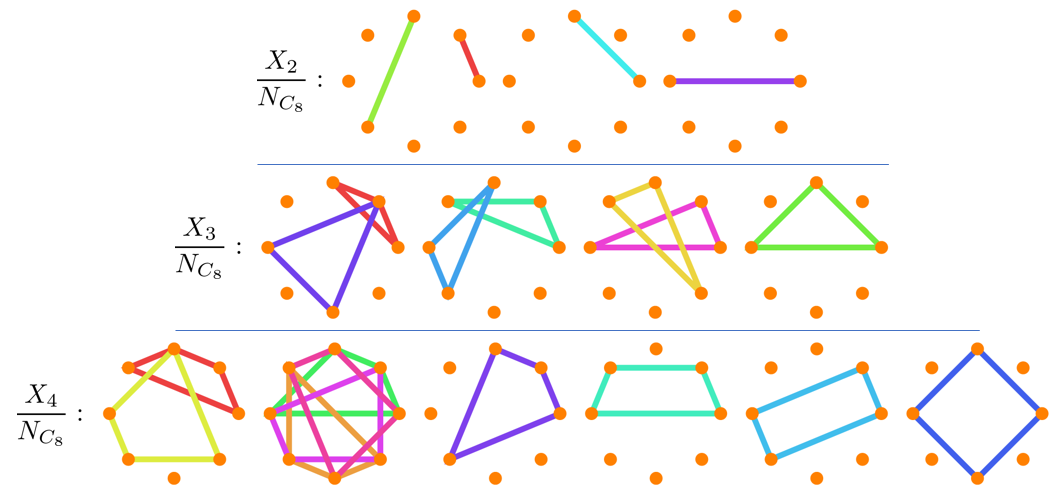}
\caption{The sets of distinct maximal entanglements in $\mathcal H_{C_8}^{(8)}$ identified from distinct orbits under $N_{C_8}$.  Each element of $X_m/N_{C_8}$ shows the elements of $X_m/C_8$ whose maxima were equated under $N_{C_8}$.}
\end{figure}
\end{center}
\end{widetext}

As we can see, the action of the normalizer took us from 4, 7, and 10 distinct maxima for $m=2$, 3, and 4 respectively, down to 3, 4, and 6.  There is still more to simplify, though, because Theorem 2 shows that some of these maxima are not unique to $C_8$, and can be restricted to a simpler problem.

In using Theorem 2, we first identify normal subgroups of $C_n$.  Of course, since $C_n$ is abelian, all of its subgroups are normal are isomorphic to $C_k$ where $k|n$, which confirms the generalized result in \cite{PhysRevA.100.042318} that
\begin{align}
    \max_{\ket \psi \in \mathcal H_{C_n}^{(n)}} E_x \left( \ket \psi \right) = \max_{\ket \phi \in \mathcal H_{C_k}^{(n/k)}} E_x \left( \ket \phi \right)
\end{align}
where the components, $x_i$, of $x$ differ by multiples of $k$.  This highlights that the cyclic group has the convenient property that the maximally entangled states in large systems can be built from maximally entangled states in smaller systems with the same symmetry.  In any given, $n$, then, there are some smaller set of entanglements which are unique to that $n$.  For example, in $n=8$, Figure 3 shows the entanglements which can be extended from $\mathcal H_{C_2}^{(2)}$ and $\mathcal H_{C_4}^{(4)}$.  This results in only 1, 3, 5 maximal entanglements which are unique to $\mathcal H_{C_8}^{(8)}$ for $m=2$, 3, and 4 respectively.

\begin{widetext}
\begin{center}
\begin{figure}[h]
\centering
\includegraphics[width = 110mm]{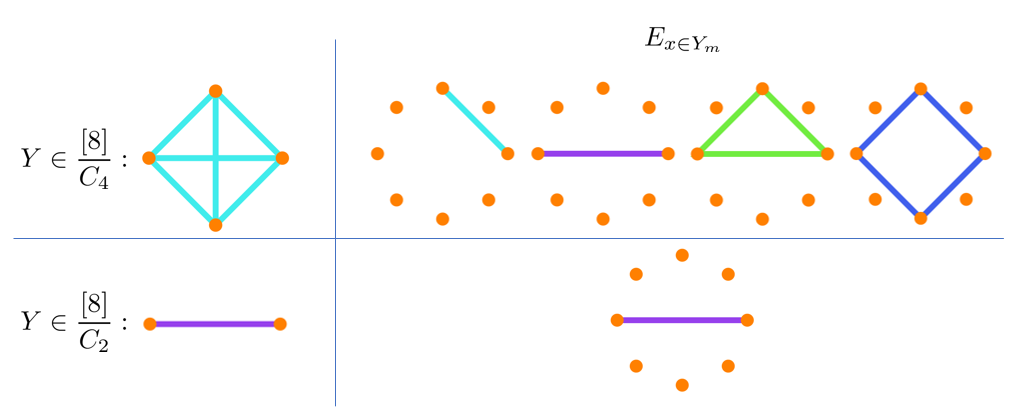}
\caption{Sample orbits of $[8]$ under the normal subgroups of $C_8$, as well as the entanglements which fall under that orbit and whose maxima will be subject to the simplification of Theorem 2.}
\end{figure}
\end{center}
\end{widetext}

\subsubsection{The Polyhedral Groups, $T$, $O$, and $I$}

The natural extension of our previous examples is to consider symmetries which arise from three dimensional arrangements of particles.  In keeping with the highly symmetric picture, we will consider particles configured at the vertices of the platonic solids.  The set of rotations and reflections which leave the platonic solids invariant from the polyhedral groups, $T$, $O$, and $I$.  Similar to the rings of particles under $C_n$ and $D_n$, the action of the polyhedral groups on their corresponding arrangements of particles permute those party labels transitively.  These systems offer another interesting application for the study of entanglement due to their relevance in molecular chemistry and again in condensed matter physics \cite{PhysRevB.88.054101}.

The simplest place to begin is with the tetrahedron and the tetrahedral group, $T$.  This case is somewhat trivial, though, as $T$ acting on $[n=4]$ leaves only a single orbit in $X_2$, and therefore a single entanglement with no relations to make.

Moving up in size, we can now consider the octahedral group, $O$, which acts on both the octahedron and the cube.  We can start with the octahedron and find the orbits of $X_m/O_6$ for $m=2$ and $m=3$, where we have added the subscript, $n$, to $O$ to distinguish between the octahedron, $n=6$, and cube, $n=8$.  These sets of distinct entanglements are shown in Figure 4.

\begin{center}
\begin{figure}[h]
\centering
\includegraphics[width = 50mm]{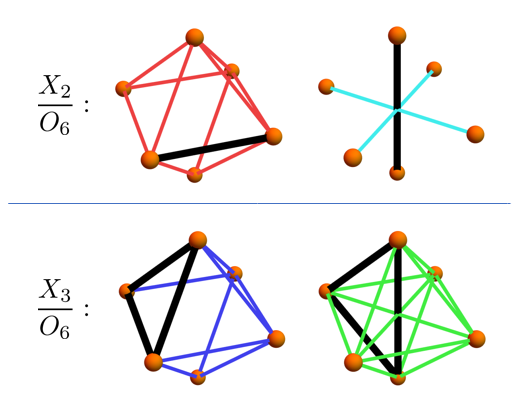}
\caption{The sets of distinct entanglements in $\mathcal H_{O_6}^{(6)}$ with highlighted representative from each set.}
\end{figure}
\end{center}

Turning now to the maxima of these entanglements, we can again make use of Theorem 2 by identifying that the inversion operation in $O_6$, which exchanges each particle with its spatially furthest counterpart, forms, together with the identity, a normal subgroup isomorphic to $C_2$.  The orbits of this subgroup are of course the pairs of opposite particles, as demonstrated in Figure 5.

\begin{center}
\begin{figure}[h]
\centering
\includegraphics[width = 50mm]{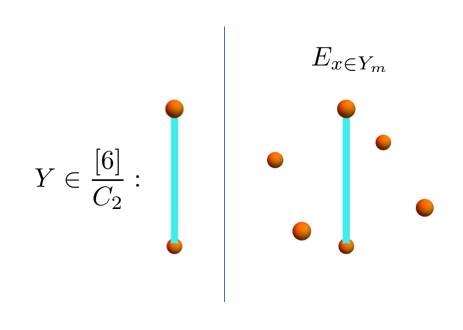}
\caption{A sample orbit of $[n=6]$ under $C_2 \lhd O_6$ and the corresponding entanglement element of that orbit.}
\end{figure}
\end{center}

It is convenient that these orbits involve only two particles, for which pure state entanglement is relatively well understood.  For example, one could construct the state which maximizes opposing pairwise entanglement in $\mathcal H_{O_6}^{(6)}$ by placing Bell-type pairs across from each other on each diameter of the octahedron.

We can offer a similar treatment of the octahedral group symmetry on the cube, $O_8$, starting by identifying the distinct entanglements via the orbits in $X_m/O_8$ for $m=2$, 3, and 4.  These are compiled in Figure 6.

\begin{widetext}
\begin{center}
\begin{figure}[h]
\centering
\includegraphics[width = 110mm]{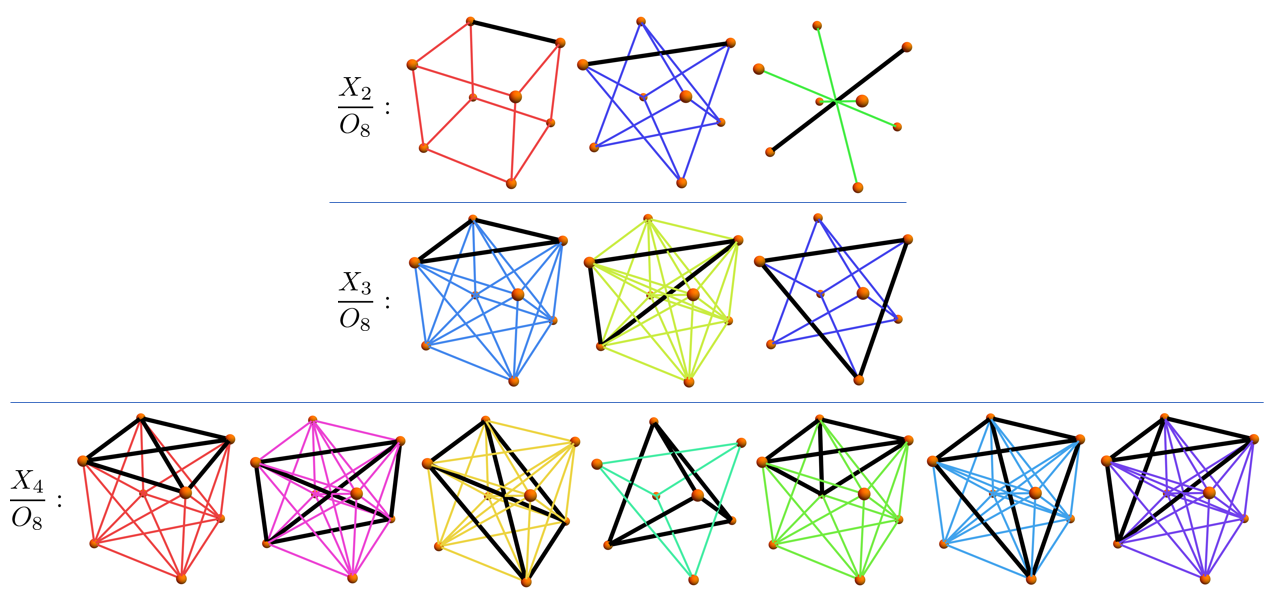}
\caption{The sets of distinct entanglements in $\mathcal H_{O_8}^{(8)}$ with highlighted representative from each set.}
\end{figure}
\end{center}
\end{widetext}

In simplifying the maxima of these entanglements, we can again identify the group generated by the inversion operation as a normal subgroup isomorphic to $C_2$.  Additionally, now, we have a second normal subgroup provided by the symmetries of the tetrahedral group, $T$.  This removes a number of entanglements to maximize over in $\mathcal H_{O_8}^{(8)}$, as diagrammed in Figure 7.

\begin{center}
\begin{figure}[h]
\centering
\includegraphics[width = 80mm]{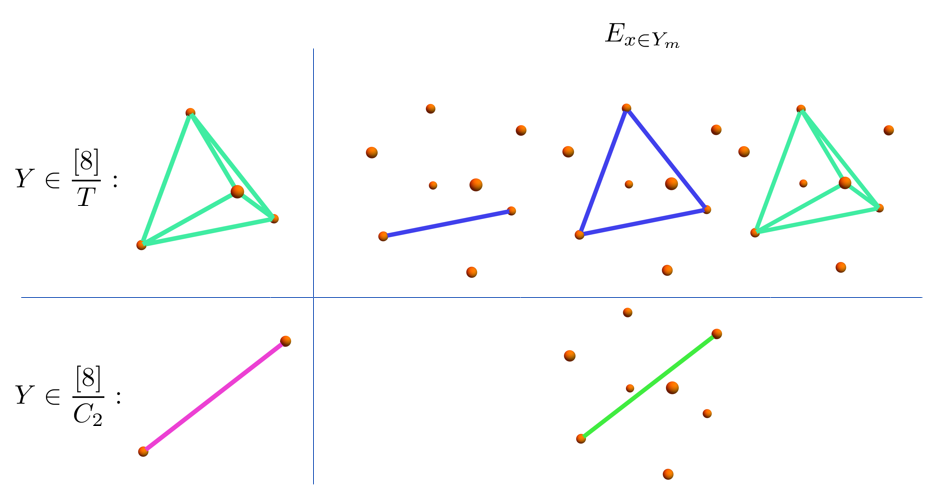}
\caption{Sample orbits of $[n=8]$ under $C_2, T \lhd O_8$ and the corresponding entanglement elements within those orbits.}
\end{figure}
\end{center}

As an interesting demonstration of this restriction, consider evaluating the $m=2$ pairwise concurrence \cite{PhysRevLett.80.2245} of particles on opposite corners of the face of the cube.  As we have shown, the maximum of this entanglement is the same as the maximum for pairs on the tetrahedron.  This maximum is known, thanks to the fact that the action of $T$ on $n=4$ is isomorphic to that of $S_4$.  And for such totally permutation invariant states, the maximal pairwise concurrence is known \cite{PhysRevA.62.050302} to be $2/n$, achieved by the generalized $W$-state,
\begin{align}
    \ket{W_n}=\frac{1}{\sqrt n} \sum_{\pi \in C_n} U_\pi | 1 \underbrace{0 \ldots 0}_{n-1} \rangle.
\end{align}
All together, then, we can construct the state,
\begin{align}
    \ket{\Psi}=\ket{W_4}_{Y_1} \otimes \ket{W_4}_{Y_2},
\end{align}
where $Y_j$ represent sets of 4 particles on opposite face corners, and conclude that this state achieves the associated maximal pairwise concurrence of $1/2$.

It is interesting to note that Theorem 1 was of no aid in the study of the octahedron and cube because of the trivial nature of the action of the normalizers of $O_6$ and $O_8$ on $[n=6]$ and $[n=8]$.  The larger solids, however, do indeed have non-trivial normalizers which allows us to invoke Theorem 1.  Let us demonstrate this effect on the icosahedron, which is symmetrized by the icosahedral symmetry group on 12 vertices, $I_{12}$.  Let us begin by examining pairwise entanglements in $X_2/I_{12}$, for which there are only the three elements shown in Figure 8.

\begin{center}
\begin{figure}[h]
\centering
\includegraphics[width = 80mm]{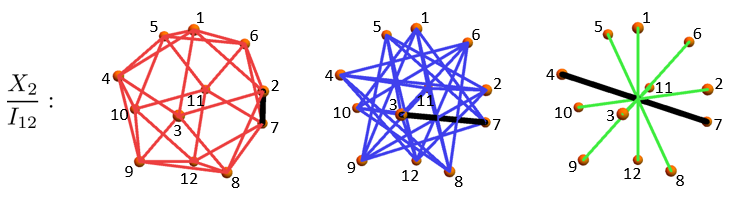}
\caption{The set of orbits, $X_2/I_{12}$, with representative element highlighted and vertices labeled.}
\end{figure}
\end{center}

Examining the order of the normalizer of $I_{12}$, we find that it is twice the order of $I_{12}$ itself.  Using the vertex numbering shown in Figure 8, we can identify the permutation, $\nu = (2 \, 8 \, 6 \, 11)(3 \, 10 \, 5 \, 9)(4 \, 7)$, as a sample application of the non trivial action of that normalizer.  Applying $\nu$ exchanges the first two elements of $X_2/I_{12}$, therefore we can equate the maxima of those two entanglements by Theorem 1.  Of course, maximal entanglements for elements of $X_m/I_{12}$ for $m>2$ are likewise related by $\nu$ and Theorem 1, but those orbits are too unweildy to compile in full here, so we will suffice with demonstrating this result on $m=2$.  As a final note, we can observe the application of Theorem 2 again on $m=2$ thanks to $I_{12}$ having a normal subgroup isomorphic to $C_2$. This naturally reduces the third element of $X_2/I_{12}$ to isolated entangled pairs, allowing us to again achieve maximal entanglement by weaving bell pairs across antipodal vertices of the icosahedron.

\section{Conclusion}

We have offered two novel approaches to reducing the number of distinct maximal entanglements in a quantum system which is invariant under a group of permutations, $G$.  This significantly reduces the search for the set of unique maxima, but sadly offers little help in performing those remaining maximizations.  Maximizations for certain choices of $G$, $E$, $n$, and $m$ have been accomplished \cite{PhysRevA.96.062310}\cite{PhysRevA.62.050302}\cite{2010NJPh...12g3025A} and as that list grows, so too does the application of our theorems.

The choice of $G$ and $E$ in general is highly motivated by the nature of the physical system being considered.  We chose to restrict to $E$ which treat each party symmetrically and $G$ which act transitively on $[n]$, but it would be an interesting future direction to relax those constraints.  Doing so would add context to systems where the particles have a distinguishable structure, for instance a ring of particles with alternating Hilbert space dimension, $d$, or molecules made up of multiple species of atoms.  Here one might ask that the structure of $G$ reflect the construction of the physical systems by restricting the orbits of $G$ to particles of the same type.

An altogether new direction would be to consider symmetries other than those of permuting the party labels.  One might ask how the ideas of this paper could describe systems subject to some other group of symmetries, such as ones which act non-trivially on a single tensor factor, such as those considered in \cite{meill2019mean}.

\bibliography{mybib}

\end{document}